\documentclass[12pt]{article}
\usepackage[margin=1.25in]{geometry}
\usepackage{amsmath,amsthm,amssymb}
\usepackage{times}
\usepackage{enumerate}
\usepackage{setspace}
\onehalfspacing
\usepackage[colorlinks,linkcolor=blue,anchorcolor=blue,citecolor=blue]{hyperref}
\usepackage{comment}
\usepackage{dsfont}
\usepackage{float}
\usepackage[title]{appendix}
\usepackage{natbib}
\usepackage{graphicx}
 
 
\theoremstyle{plain}

\newtheorem{theorem}{Theorem}

\newtheorem{lemma}{Lemma}

\newcommand{\R}{\mathbb{R}}

\makeatletter

\newcommand{\Rmnum}[1]{\expandafter\@slowromancap\romannumeral #1@}
 
\specialcomment{note}{\begingroup\color{blue} Note: }{\endgroup}
 
\title{Ambiguous Persuasion with Prior Ambiguity}
 
\author{Xiaoyu Cheng\footnote{Department of Economics, Florida State University, Tallahassee, FL, USA. E-mail: \href{mailto:xcheng@fsu.edu}{xcheng@fsu.edu}.}}
 
\begin{document}
\maketitle

\begin{abstract}
\cite{cheng2025ambiguous} establishes that in a persuasion game where both the sender and the receiver have Maxmin Expected Utility (MEU) preferences, the sender never strictly benefits from using ambiguous communication strategies over standard (non-ambiguous) ones. This note extends the analysis to environments with prior ambiguity, i.e., pre-existing ambiguity about the payoff-relevant state, and shows that, in the binary state and binary action case, the same no-gain result continues to hold.
\end{abstract}
 
\section{Introduction}
An ambiguous persuasion game generalizes the Bayesian persuasion model of \cite{kamenica2011bayesian} by allowing the sender to choose an \emph{ambiguous experiment} rather than a standard statistical experiment. \cite{cheng2025ambiguous} introduces an ex-ante formulation of the ambiguous persuasion game, in which the receiver, after observing the sender’s choice of experiment but before any message is realized, selects a message-contingent action plan. Under this formulation, \cite{cheng2025ambiguous} shows that when both players have Maxmin Expected Utility (MEU) preferences, the sender never strictly benefits from choosing ambiguous experiments. Following this approach, \cite{cheng2025persuasion} further characterizes when and how the sender can gain from ambiguous communication when both players instead have the less extreme smooth ambiguity preferences of \cite{klibanoff2005smooth}.

In both papers, to isolate the role of ambiguous communication, the main analysis focuses on the case where the players share a common \emph{single} prior belief about the payoff-relevant state, i.e., there is no pre-existing prior ambiguity. In this note, I extend the analysis of \cite{cheng2025ambiguous} to environments with prior ambiguity, where both players have MEU preferences and share a common ambiguous belief represented by a set of priors.\footnote{An extension of the analysis in \cite{cheng2025persuasion} to such environments is discussed in that paper.} Although the original argument cannot be applied directly, this note shows that, in the special case of binary state and binary action, the same no-gain result continues to hold. A full analysis of more general settings is left for future work.

\section{The Persuasion Game with Prior Ambiguity}
 
I adopt the notation and terminology from \cite{cheng2025persuasion} to formalize the persuasion game with prior ambiguity. Let $\Omega$ be a finite set of payoff-relevant states. The sender and receiver share a common ambiguous belief over $\Omega$, represented by a closed and convex set $P \subseteq \Delta(\Omega)$. Both players are assumed to have Maxmin Expected Utility (MEU) preferences. The receiver has a finite action set $A$. If the receiver chooses $a \in A$, the payoff to the sender is $u_s(a, \omega)$ and to the receiver is $u_r(a, \omega)$, when the state is $\omega \in \Omega$. A statistical experiment $\sigma$ is a mapping from $\Omega$ to $\Delta(M)$, where $M$ is a finite set of messages. 

The receiver's strategy is a mapping $\tau: M \rightarrow \Delta(A)$. Let $u_{i}(p, \sigma, \tau)$ denote the expected payoff of player $i \in \{s, r\}$ when the sender uses experiment $\sigma$, the prior is $p \in P$, and the receiver's strategy is $\tau$:
\begin{equation*}
u_{i}(p, \sigma, \tau) = \sum_{\omega, m, a} p(\omega)\sigma(m|\omega)\tau(a|m)u_i(a, \omega).
\end{equation*}
As the set of prior is fixed, further write
\begin{equation*}
  u_{i}(\sigma, \tau) = \min\limits_{p \in P}\sum_{\omega, m, a} p(\omega)\sigma(m|\omega)\tau(a|m)u_i(a, \omega).  
\end{equation*}

When the sender is constrained to use only statistical experiments, the sender's program is given by
\begin{equation}\label{program:statistical}
\begin{aligned}
\max_{\sigma} \quad & u_s(\sigma, \tau) \\
\text{s.t.} \quad & \tau \in BR(\sigma) = \mathop{\arg\max}_{\tau} u_r(\sigma, \tau), 
\end{aligned}
\end{equation}
where the receiver's program is more explicitly, 
\begin{equation*}
    \max_{\tau} \min\limits_{p \in P} \sum_{\omega, m, a} p(\omega)\sigma(m|\omega)\tau(a|m)u_r(a, \omega).
\end{equation*}


Next, the sender may choose an ambiguous experiment $\Sigma$, which is a closed and convex set of statistical experiments. Given an ambiguous experiment $\Sigma$ and the receiver's strategy $\tau$, payoff of player $i$ is given by 
\begin{equation*}
    U_{i}(\Sigma, \tau) = \min_{p \in P} \min_{\sigma \in \Sigma} \sum_{\omega, m, a} p(\omega)\sigma(m|\omega)\tau(a|m)u_i(a, \omega) = \min_{\sigma \in \Sigma} u_i(\sigma, \tau).
\end{equation*}
In this case, the sender's program becomes
\begin{equation}\label{program:ambiguous}
\begin{aligned}
\max_{\Sigma} \quad &  U_{s}(\Sigma, \tau) \\
\text{s.t.} \quad & \tau \in BR(\Sigma) = \mathop{\arg\max}_{\tau} U_r(\Sigma, \tau).
\end{aligned}
\end{equation}
And the receiver's program now becomes
\begin{equation*}
    \max_{\tau} \min\limits_{p \in P} \min\limits_{\sigma \in \Sigma} \sum_{\omega, m, a} p(\omega)\sigma(m|\omega)\tau(a|m)u_r(a, \omega) = \max_{\tau} U_{r}(\Sigma, \tau).
\end{equation*}

Say that ambiguous communication \emph{benefits} the sender if their optimal value in program \eqref{program:ambiguous} is strictly higher than that in program \eqref{program:statistical}. 

\section{Preliminary Analysis}
In this section, I present a revelation principle that simplifies the sender's program. In particular, it leads to a simple necessary condition for ambiguous communication to benefit the sender. 

\subsection{A Revelation Principle}
Say that an experiment $\sigma$ is \emph{canonical} if it maps $\Omega$ to $\Delta(A)$, i.e., its messages are action recommendations. Let $\tau^{*}$ denote the obedient strategy of the receiver, i.e., $\tau^{*}(a|a) = 1$ for all $a \in A$. I call such an experiment as itself \emph{obedient}. The next Lemma shows that it is without loss to let the sender choose among canonical and obedient experiments.

\begin{lemma}\label{lem:revelation_principle}
    For any $(\Sigma, \tau)$ such that $\tau \in BR(\Sigma)$, there exists a canonical and obedient ambiguous experiment $\Sigma^{*}$ such that $u_{i}(\sigma, \tau) = u_{i}(\sigma^{*}, \tau^{*})$ for all $i \in \{s, r\}$ and $\sigma \in \Sigma$. 
\end{lemma}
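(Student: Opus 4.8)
The plan is to use the standard \emph{compose-with-the-receiver's-strategy} construction and to check that the MEU/prior-ambiguity structure does not interfere with it. Concretely, for each $\sigma \in \Sigma$ I would define the canonical experiment $\sigma^{*}: \Omega \to \Delta(A)$ by $\sigma^{*}(a|\omega) = \sum_{m}\sigma(m|\omega)\tau(a|m)$ and set $\Sigma^{*} = \{\sigma^{*}: \sigma \in \Sigma\}$. First I would check that $\Sigma^{*}$ is a legitimate ambiguous experiment. The map $\sigma \mapsto \sigma^{*}$ is affine, so it carries the convex set $\Sigma$ to a convex set; and it is continuous on $\Sigma$, which is a closed subset of the compact set $\prod_{\omega}\Delta(M)$ of all statistical experiments, so the image $\Sigma^{*}$ is compact and hence closed. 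By construction every element of $\Sigma^{*}$ maps $\Omega$ into $\Delta(A)$, so $\Sigma^{*}$ is canonical.

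Next I would establish the payoff identity. For any $i \in \{s,r\}$ and any $p \in P$, expanding $u_{i}(p,\sigma^{*},\tau^{*})$ and using $\tau^{*}(a|a)=1$ gives $\sum_{\omega,a}p(\omega)\sigma^{*}(a|\omega)u_{i}(a,\omega) = \sum_{\omega,m,a}p(\omega)\sigma(m|\omega)\tau(a|m)u_{i}(a,\omega) = u_{i}(p,\sigma,\tau)$. Since this holds for every $p$, taking $\min_{p\in P}$ on both sides gives $u_{i}(\sigma^{*},\tau^{*}) = u_{i}(\sigma,\tau)$ for all $i$ and all $\sigma \in \Sigma$, which is the desired value equality; in particular $U_{i}(\Sigma^{*},\tau^{*}) = \min_{\sigma\in\Sigma}u_{i}(\sigma,\tau) = U_{i}(\Sigma,\tau)$.

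The only substantive step — and where I expect all the real work to be — is verifying that $\Sigma^{*}$ is obedient, i.e. $\tau^{*} \in BR(\Sigma^{*})$. Given any receiver strategy $\tilde{\tau}: A \to \Delta(A)$ against the canonical messages, I would define its pullback $\hat{\tau}: M \to \Delta(A)$ by $\hat{\tau}(a|m) = \sum_{a'}\tau(a'|m)\tilde{\tau}(a|a')$. A direct expansion — again an identity that holds term by term, hence pointwise in both $p \in P$ and $\sigma \in \Sigma$ — shows $u_{r}(p,\sigma^{*},\tilde{\tau}) = u_{r}(p,\sigma,\hat{\tau})$, so after taking $\min_{p}$ and then $\min_{\sigma}$ we obtain $U_{r}(\Sigma^{*},\tilde{\tau}) = U_{r}(\Sigma,\hat{\tau})$. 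Since $\hat{\tau}$ is a feasible receiver strategy against $\Sigma$ and $\tau \in BR(\Sigma)$, this yields $U_{r}(\Sigma^{*},\tilde{\tau}) = U_{r}(\Sigma,\hat{\tau}) \le U_{r}(\Sigma,\tau) = U_{r}(\Sigma^{*},\tau^{*})$, so $\tau^{*}$ is indeed a best response to $\Sigma^{*}$. The conceptual point that makes this go through is that every equality used — both the payoff identity and the $\tilde{\tau}\leftrightarrow\hat{\tau}$ correspondence — is linear and holds state-by-state and prior-by-prior, so it commutes with the two $\min$ operators defining the MEU payoffs; consequently the prior set $P$ contributes nothing beyond the classical revelation-principle argument, and the main care needed is simply to keep the order of the minimizations straight.
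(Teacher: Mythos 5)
Your proposal is correct and follows essentially the same route as the paper: the same composed experiment $\sigma^{*}(a|\omega)=\sum_{m}\sigma(m|\omega)\tau(a|m)$, the same prior-by-prior payoff identity, and the same pullback of a canonical deviation $\tilde{\tau}$ to $\hat{\tau}(a|m)=\sum_{a'}\tau(a'|m)\tilde{\tau}(a|a')$ to establish obedience (the paper phrases this last step as a proof by contradiction, yours is direct, but the argument is identical). Your added verification that $\Sigma^{*}$ is closed and convex is a detail the paper asserts without proof, and it is handled correctly.
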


\begin{proof}[Proof of Lemma \ref{lem:revelation_principle}]
    Fix $(\Sigma, \tau)$ such that $\tau \in BR(\Sigma)$. For each $\sigma \in \Sigma$, define a canonical experiment $\sigma^{*}$ as follows: 
    \begin{equation*}
        \sigma^{*}(a|\omega) = \sum_{m} \sigma(m|\omega)\tau(a|m), \quad \forall a \in A, \omega \in \Omega.
    \end{equation*}
    Let $\Sigma^{*} = \{\sigma^{*} \mid \sigma \in \Sigma\}$, which is a closed and convex set of canonical experiments. Then for all $i \in \{s, r\}$, it holds that
    \begin{align*}
        u_{i}(\sigma, \tau) &= \min\limits_{p \in P} \sum_{\omega, m, a} p(\omega)\sigma(m|\omega)\tau(a|m)u_i(a, \omega) \\
        &= \min\limits_{p \in P} \sum_{\omega, a} p(\omega)\sigma^{*}(a|\omega)u_i(a, \omega) \\
        &= u_{i}(\sigma^{*}, \tau^{*}).
    \end{align*}
    
    Next, show that $\tau^{*} \in BR(\Sigma^{*})$. Suppose not, then there exists $\delta: A \rightarrow \Delta(A)$ such that
    \begin{equation*}
        U_{r}(\Sigma^{*}, \delta) > U_{r}(\Sigma^{*}, \tau^{*}).
    \end{equation*}
    Let $\tau'(a|m) = \sum_{a'} \delta(a|a')\tau(a'|m)$, then 
    \begin{align*}
        U_{r}(\Sigma, \tau') = U_{r}(\Sigma^{*}, \delta) > U_{r}(\Sigma^{*}, \tau^{*}) = U_{r}(\Sigma, \tau),
    \end{align*}
    thus, a contradiction. 
\end{proof}

Notice the same revelation principle holds for the case of statistical experiments as well. From this point on, all experiments are assumed to be canonical. The sender's program \eqref{program:statistical} can be rewritten as
\begin{equation}\label{program:statistical_revealed}
\begin{aligned}
\max_{\sigma} \quad & u_s(\sigma, \tau^{*})\\
\text{s.t.} \quad & \tau^{*} \in BR(\sigma).
\end{aligned}
\end{equation}  
Similarly, the sender's program \eqref{program:ambiguous} can be rewritten as
\begin{equation}\label{program:ambiguous_revealed}
\begin{aligned}
\max_{\Sigma} \quad &  U_{s}(\Sigma, \tau^{*})\\
\text{s.t.} \quad & \tau^{*} \in BR(\Sigma).
\end{aligned}
\end{equation}

Thus, equivalently, ambiguous communication benefits the sender if value of the program \eqref{program:ambiguous_revealed} is strictly greater than that of \eqref{program:statistical_revealed}. 

\subsection{A Necessary Condition}

Let $\sigma^{*}$ denote the optimal solution to the sender's program \eqref{program:statistical_revealed} and suppose that an ambiguous experiment $\Sigma$ is obedient and benefits the sender. Then it holds that 
\begin{align*}
    U_{s}(\Sigma, \tau^{*}) &= \min\limits_{p \in P} \min\limits_{\sigma \in \Sigma} u_s(\sigma, \tau^{*}) > \min\limits_{p \in P} u_s(\sigma^{*}, \tau^{*}),
\end{align*}
which further implies that for all $\sigma \in \Sigma$,
\begin{equation*}
    \min\limits_{p \in P} u_s(\sigma, \tau^{*}) > \min\limits_{p \in P} u_s(\sigma^{*}, \tau^{*}).
\end{equation*}
Because $\sigma^{*}$ is the optimal statistical experiment, it must be the case that for all $\sigma \in \Sigma$, $\tau^{*} \not\in BR(\sigma)$. In other words, if an ambiguous experiment benefits the sender, then it must not include any statistical experiment for which obedience is a best response. This observation leads to the following necessary condition. 

\begin{lemma}\label{lem:necessary_condition}
    If ambiguous communication benefits the sender, then there exists an ambiguous experiment $\Sigma$ such that $\tau^{*} \in BR(\Sigma)$ but for all $\sigma \in \Sigma$, $\tau^{*} \not\in BR(\sigma)$.
\end{lemma}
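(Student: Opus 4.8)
The plan is to convert the informal observation that immediately precedes the statement into a short contrapositive argument resting on the revelation principle. Suppose ambiguous communication benefits the sender; by Lemma \ref{lem:revelation_principle} we may work with the revealed programs, so the (supremum) value of \eqref{program:ambiguous_revealed} strictly exceeds the value $v^{*}$ of \eqref{program:statistical_revealed}. First I would pick any ambiguous experiment $\Sigma$ that is feasible in \eqref{program:ambiguous_revealed}, i.e.\ $\tau^{*} \in BR(\Sigma)$, and satisfies $U_{s}(\Sigma, \tau^{*}) > v^{*}$; such a $\Sigma$ exists precisely because the supremum in \eqref{program:ambiguous_revealed} is strictly larger than $v^{*}$. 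This $\Sigma$ is the witness claimed in the lemma, and it already delivers the first half of the conclusion, $\tau^{*} \in BR(\Sigma)$. It remains only to show that $\Sigma$ contains no statistical experiment for which obedience is a best response.

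For the second step I would unwind the definition of $U_{s}$ as a minimum over $\Sigma$. For every $\sigma \in \Sigma$,
\[
u_{s}(\sigma, \tau^{*}) = \min\limits_{p \in P}\sum_{\omega, a} p(\omega)\sigma(a|\omega)u_{s}(a, \omega) \ge \min\limits_{\sigma' \in \Sigma} u_{s}(\sigma', \tau^{*}) = U_{s}(\Sigma, \tau^{*}) > v^{*}.
\]
Thus each member of $\Sigma$, evaluated against the obedient plan $\tau^{*}$, already beats the optimal statistical value. Now suppose for contradiction that $\tau^{*} \in BR(\sigma)$ for some $\sigma \in \Sigma$. Then $\sigma$ is a feasible point of \eqref{program:statistical_revealed} with objective value $u_{s}(\sigma, \tau^{*}) > v^{*}$, contradicting the definition of $v^{*}$ as the value of that program. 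Hence $\tau^{*} \notin BR(\sigma)$ for all $\sigma \in \Sigma$, which together with $\tau^{*} \in BR(\Sigma)$ is exactly the assertion.

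I do not expect a substantive obstacle here: the entire content is that "benefiting" and "containing an obedience-compatible statistical experiment" cannot coexist, and this is forced by the very definition of the maxmin objective $U_{s}$ as a minimum over $\Sigma$. The only points requiring mild care are (i) phrasing the statistical optimum as a value $v^{*}$ rather than as an attained maximizer, so that no separate attainment claim is needed — though since $P$ is compact and the set of canonical experiments $\sigma$ with $\tau^{*} \in BR(\sigma)$ is closed and bounded, a maximizer $\sigma^{*}$ does exist and one could argue directly with it as in the text — and (ii) invoking Lemma \ref{lem:revelation_principle} at the outset so that the benefiting ambiguous experiment can be taken canonical and obedient, which is what lets feasibility of $\Sigma$ be expressed through the single plan $\tau^{*}$.
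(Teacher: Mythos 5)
Your proposal is correct and follows essentially the same argument as the paper: both exploit that $U_{s}(\Sigma,\tau^{*})=\min_{\sigma\in\Sigma}u_{s}(\sigma,\tau^{*})$, so every $\sigma$ in a benefiting obedient $\Sigma$ strictly beats the statistical optimum and hence cannot itself be obedient-feasible. The only cosmetic difference is that you argue with the value $v^{*}$ rather than an attained maximizer $\sigma^{*}$, which changes nothing of substance.
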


Given Lemma \ref{lem:necessary_condition}, it is thus essential to analyze the condition for obedience to hold under both statistical and ambiguous experiments. 

\section{Characterization of Obedience}

Let $\pi^{(p,\sigma)} \in \Delta(\Omega \times A)$ denote the joint distribution of states and actions induced by prior $p$ and experiment $\sigma$, i.e.,
\begin{equation*}
    \pi^{(p,\sigma)}(\omega, a) = p(\omega) \sigma(a|\omega).
\end{equation*}
Let $\Pi^{*}$ denote the set of joint distributions under which obedience holds, that is, 
\begin{equation*}
    \tau^{*} \in \mathop{\arg\max}_{\tau} \sum_{a, \omega} \pi(\omega, a) u_r(a', \omega)\tau(a'|a).
\end{equation*}
Equivalently, by linearity of the objective function, it holds that $\pi \in \Pi^{*}$ if and only if for every $a, b \in A$,
\begin{equation*}
    \sum_{\omega} \pi(\omega, a) [u_r(b, \omega) - u_r(a, \omega)] \leq 0.
\end{equation*}
As a result, $\Pi^{*}$ is the intersection of a finite number of half-spaces, thus it is a closed and convex set.

For each $\pi \in \Pi$, let $\pi_{a} \in \R^{|\Omega|}$ denote the vector with $\pi_{a}(\omega) = \pi(\omega, a)$ for all $\omega \in \Omega$. For $a, b \in A$, let $v_{a \rightarrow b} \in \R^{|\Omega|}$ denote the vector of the receiver's payoff difference when taking action $b$ instead of $a$ in each state, i.e., $v_{a \rightarrow b}(\omega) = u_r(b, \omega) - u_r(a, \omega)$ for all $\omega \in \Omega$. Therefore, $\pi \in \Pi^{*}$ if and only if 
\begin{equation}\label{eq:obedience_joint}
    \langle \pi_{a}, v_{a \rightarrow b} \rangle \leq 0, \quad \forall a, b \in A.
\end{equation}

\paragraph{Obedience under Statistical Experiments} Given a statistical experiment $\sigma$, the minimax theorem \citep{sion1958general} implies that $\tau^{*} \in BR(\sigma)$ if and only if it is part of a saddle point, i.e., there exists $p \in P$ such that 
\begin{equation*}
    p \times \sigma \in \Pi^{*}, \quad \text{and}\quad  p \in \mathop{\arg\min}_{p \in P} u_{r}(p, \sigma, \tau^{*}).
\end{equation*}
Let $P_{r}^{*}(\sigma)$ denote the set of worst-case priors when the receiver's strategy is $\tau^{*}$. Equivalently, $\tau^{*} \in BR(\sigma)$ if and only if 
\begin{equation}\label{eq:obedience_statistical}
    \exists \ p \in P_{r}^{*}(\sigma) \text{ such that } \langle (p \times \sigma)_{a}, v_{a \rightarrow b} \rangle \leq 0, \quad \forall a, b \in A.
\end{equation}

\paragraph{Obedience under Ambiguous Experiments} Given an ambiguous experiment $\Sigma$, let $\Pi(\Sigma)$ denote the set of joint distributions induced by $\Sigma$, that is,
\begin{equation*}
    \Pi(\Sigma) = \{\pi^{(p, \sigma)} \mid p \in P, \sigma \in \Sigma\}.
\end{equation*}
Notice that while $P$ and $\Sigma$ are both closed and convex sets, $\Pi(\Sigma)$ is not necessarily convex. Further let $\overline{co}(\Pi(\Sigma))$ denote its closed convex hull and define
\begin{equation*}
    K^{*}(\Sigma) := \mathop{\arg\min}_{\pi \in \overline{co}(\Pi(\Sigma))} \sum_{a, \omega} \pi(\omega, a) u_r(a, \omega),
\end{equation*}
i.e., the set of joint distributions in $\overline{co}(\Pi(\Sigma))$ that minimizes the receiver's payoff under $\tau^{*}$. Notice that $K^{*}(\Sigma)$ is an exposed face of the convex set $\overline{co}(\Pi(\Sigma))$, thus its extreme points are also extreme points of $\overline{co}(\Pi(\Sigma))$, taking the form of $\pi^{(p, \sigma)}$ for some $\sigma \in \Sigma$ and $p \in P_{r}^{*}(\sigma)$. 

Again, with the minimax theorem applied to the convex hull, it holds that $\tau^{*} \in BR(\Sigma)$ if and only if 
\begin{equation}\label{eq:obedience_ambiguous}
    \exists \ \pi \in K^{*}(\Sigma) \text{ such that } \langle \pi_{a}, v_{a \rightarrow b} \rangle \leq 0, \quad \forall a, b \in A.
\end{equation}

\bigskip 

When $P$ is a singleton as in \cite{cheng2025ambiguous}, obedience of $\Sigma$ implies the existence of $\sigma \in \Sigma$ such that $\tau^{*} \in BR(\sigma)$ under the singleton prior, which further establishes that ambiguous communication does not benefit the sender. The argument, however, does not extend to the case where $P$ is not a singleton: in that setting, not every point in $K^{*}(\Sigma)$ can be represented as a decomposition into $(p,\sigma)$ pairs. This could happen when $\pi \in K^{*}(\Sigma)$ is a convex combination of extreme points that are themselves induced by different priors. Therefore, it leaves open the possibility that obedience of $\Sigma$ does not imply obedience of any $\sigma \in \Sigma$ and that ambiguous communication may benefit the sender.

\section{No Gain with Binary State and Action}
This section focuses on the special case of binary state and action, i.e., $\Omega = \{\omega_{1}, \omega_{2}\}$ and $A = \{a, b\}$. By exploiting the simple geometry in this case, the following no gain result can be established.

\begin{theorem}\label{thm:binary_no_gain}
    Suppose $|\Omega| = |A| = 2$. Then, ambiguous communication does not benefit the sender.
\end{theorem}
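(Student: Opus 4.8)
My plan is to go through the necessary condition of Lemma~\ref{lem:necessary_condition}: it suffices to show that every obedient ambiguous experiment $\Sigma$ (i.e.\ $\tau^{*}\in BR(\Sigma)$) contains a statistical experiment $\sigma$ with $\tau^{*}\in BR(\sigma)$. I would parametrize an experiment as $\sigma=(x_{1},x_{2})\in[0,1]^{2}$ with $x_{i}=\sigma(a\mid\omega_{i})$, a prior $p\in P$ by $q=p(\omega_{1})\in[\underline{q},\bar{q}]$, and assume $\underline{q}<\bar{q}$ (otherwise we are in the single-prior case of \cite{cheng2025ambiguous}). The identity driving everything, valid for all payoffs, is
\[
  F(q,\sigma):=\sum_{\omega,a}p(\omega)\sigma(a\mid\omega)u_{r}(a,\omega)=c(q)-\big\langle(\pi^{(p,\sigma)})_{a},v_{a\to b}\big\rangle,
\]
where $c(q):=q\,u_{r}(b,\omega_{1})+(1-q)u_{r}(b,\omega_{2})$ depends only on $q$. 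Writing $\Phi(\sigma,q):=\langle(\pi^{(p,\sigma)})_{a},v_{a\to b}\rangle$ and $\psi(q):=\langle p,v_{a\to b}\rangle$, condition \eqref{eq:obedience_joint} for $\pi^{(p,\sigma)}$ reads $\Phi(\sigma,q)\le\min\{0,\psi(q)\}$, and—crucially—minimizing the receiver's obedient payoff $F(q,\cdot)$ over $\sigma\in\Sigma$ is the same as maximizing $\Phi(\cdot,q)$ over $\Sigma$.

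Next I would pin down $K^{*}(\Sigma)$. Because $F(\cdot,\sigma)$ is affine in $q$, every extreme point of $K^{*}(\Sigma)$ has the form $\pi^{(q,\sigma)}$ with $q\in\{\underline{q},\bar{q}\}$, with $\sigma$ in the face $\Sigma_{q}:=\arg\max_{\sigma\in\Sigma}\Phi(\sigma,q)$, and with $q$ a worst-case prior for $\sigma$; let $Q^{\dagger}\subseteq\{\underline{q},\bar{q}\}$ be the endpoint(s) attaining $\min_{q,\sigma}F$. The decisive observation is that $\Phi(\cdot,q)$ is \emph{constant} on $\Sigma_{q}$—say equal to $M_{q}$—hence so is $\langle(\pi^{(q,\sigma)})_{b},v_{a\to b}\rangle=\psi(q)-M_{q}$. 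Therefore each $\pi\in K^{*}(\Sigma)$ is a mixture $\lambda\pi'+(1-\lambda)\pi''$ with $\pi'$ drawn over $\Sigma_{\underline{q}}$ and $\pi''$ over $\Sigma_{\bar{q}}$ (one term absent if the corresponding endpoint is not in $Q^{\dagger}$), and whether $\pi\in\Pi^{*}$ depends on the decomposition only through $\lambda$: it holds iff
\[
  \lambda M_{\underline{q}}+(1-\lambda)M_{\bar{q}}\le\min\{0,\psi(\hat{q}_{\lambda})\},\qquad\hat{q}_{\lambda}:=\lambda\underline{q}+(1-\lambda)\bar{q}.
\]
By \eqref{eq:obedience_ambiguous}, $\tau^{*}\in BR(\Sigma)$ is precisely the existence of such a $\lambda$.

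The conclusion should then come out cleanly. Fixing a witnessing $\lambda$ and $\hat{q}:=\hat{q}_{\lambda}$, affineness of $\Phi(\sigma,\cdot)$ in the prior and $M_{q}\ge\Phi(\sigma,q)$ give, for every $\sigma\in\Sigma$,
\[
  \Phi(\sigma,\hat{q})=\lambda\Phi(\sigma,\underline{q})+(1-\lambda)\Phi(\sigma,\bar{q})\le\lambda M_{\underline{q}}+(1-\lambda)M_{\bar{q}}\le\min\{0,\psi(\hat{q})\},
\]
so \emph{every} statistical experiment in $\Sigma$ is obedient at the single prior $\hat{q}$. It remains to find $\sigma_{\star}\in\Sigma$ for which $\hat{q}$ is itself a worst-case prior, for then $\tau^{*}\in BR(\sigma_{\star})$ by \eqref{eq:obedience_statistical}. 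If $\hat{q}$ is an endpoint, say $\hat{q}=\underline{q}\in Q^{\dagger}$, any $\sigma\in\Sigma_{\underline{q}}$ works, since $\underline{q}$ is a worst-case prior there. If $\hat{q}$ is interior then $Q^{\dagger}=\{\underline{q},\bar{q}\}$; writing $c_{i}(\sigma):=x_{i}u_{r}(a,\omega_{i})+(1-x_{i})u_{r}(b,\omega_{i})$, any $\sigma\in\Sigma_{\underline{q}}$ has $c_{1}(\sigma)\ge c_{2}(\sigma)$ and any $\sigma\in\Sigma_{\bar{q}}$ has $c_{1}(\sigma)\le c_{2}(\sigma)$, so convexity of $\Sigma$ and the intermediate value theorem yield $\sigma_{\star}\in\Sigma$ with $c_{1}(\sigma_{\star})=c_{2}(\sigma_{\star})$; then $F(\cdot,\sigma_{\star})$ is constant in $q$ and every prior—in particular $\hat{q}$—is worst-case for $\sigma_{\star}$. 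The step I expect to be hardest is the second: establishing the exact shape of $K^{*}(\Sigma)$ and the ``$\lambda$-only'' reduction genuinely uses the binary geometry, and one should also double-check that the (unchanged) argument—which never needed a sign restriction on $v_{a\to b}$—covers the degenerate payoff configurations (a receiver-dominant action, or $u_{r}(a,\cdot)\equiv u_{r}(b,\cdot)$).
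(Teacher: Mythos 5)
Your proposal is correct and takes essentially the same route as the paper's proof: you decompose an obedient $\pi \in K^{*}(\Sigma)$ into extreme points at the endpoint priors, exploit linearity of $\Phi$ in $\sigma$ and in the prior, and construct a convex combination $\sigma_{\star} \in \Sigma$ whose receiver payoff is constant in the prior so that $\hat{q}$ is a worst-case prior, verifying \eqref{eq:obedience_statistical} (this is exactly the paper's $\hat{\sigma}$ with $M(\hat{\sigma})=0$). The differences are only presentational: you skip the normalization of $v_{a\to b}$ (so the weakly-dominant-action case needs no separate treatment), split the endpoint/interior cases explicitly, and note the slightly stronger fact that every $\sigma \in \Sigma$ is obedient at $\hat{q}$, whereas the paper checks obedience only for $\hat{\sigma}$.
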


\begin{proof}[Proof of Theorem \ref{thm:binary_no_gain}]

By Lemma \ref{lem:necessary_condition}, it suffices to show that for any $\Sigma$ such that $\tau^{*} \in BR(\Sigma)$, there exists $\sigma \in \Sigma$ such that $\tau^{*} \in BR(\sigma)$.

Let $P = [p_{L}, p_{U}] \subseteq [0,1]$ denote the set of priors, where I use $p$ to denote the probability of state $\omega_{1}$. Let $v = [v_{1}, v_{2}]^{\top} = v_{a \rightarrow b}$, i.e., $v_{i} = u_r(b, \omega_{i}) - u_r(a, \omega_{i})$ for $i = 1, 2$. If $v \geq 0$ or $v \leq 0$, then one action weakly dominates the other, and thus $\tau^{*}$ is a best response to any experiment (statistical or ambiguous) if and only if the action is recommended in all states. The conclusion thus holds trivially.

Applying a normalization to let $v_{1} = 1$ and $v_{2} = -k$ for some $k > 0$. Then obedience of $\pi$ is equivalent to
\begin{align*}
    & \langle \pi_{a}, v \rangle = \pi(\omega_{1}, a) - k \pi(\omega_{2}, a) \leq 0, \text{ and } \\
    & \langle \pi_{b}, -v \rangle = - \pi(\omega_{1}, b) + k \pi(\omega_{2}, b) \leq 0.
\end{align*}
Similarly, the obedience of $\sigma$ under prior $p$ is equivalent to
\begin{align*}
    &\Phi_{a \rightarrow b}(p, \sigma) := \langle (p \times \sigma)_{a}, v \rangle = p \sigma(a|\omega_{1}) - k(1-p)\sigma(a|\omega_{2}) \leq 0, \text{ and } \\
    &\Phi_{b \rightarrow a}(p, \sigma) := \langle (p \times \sigma)_{b}, -v \rangle = \Phi_{a \rightarrow b}(p, \sigma) + k - (1+k) p \leq 0.
\end{align*}
In this case, obedience of $\sigma$ is further equivalent to
\begin{equation*}
    \Phi_{a \rightarrow b}(p, \sigma) \leq \min \{0, (1+k)p - k\}.
\end{equation*}

Next, let $w = [w_{1}, w_{2}]$ denote the vector with $w_{i} = u_{r}(a, \omega_{i})$ for $i = 1, 2$. Then the receiver's expected payoff is given by
\begin{align*}
    u_{r}(p, \sigma, \tau^{*}) & = p \sigma(a|\omega_{1}) w_{1} + p \sigma(b|\omega_{1}) (w_{1} + 1) + (1-p) \sigma(a|\omega_{2}) w_{2} + (1-p) \sigma(b|\omega_{2}) (w_{2} - k) \\
    & = [w_{1} - w_{2} + (1-\sigma(a|\omega_{1})) + k(1-\sigma(a|\omega_{2}))] p + w_{2} - k(1-\sigma(a|\omega_{2})) := M(\sigma) p + N(\sigma).
\end{align*}
Thus the receiver's worst-case prior is given by
\begin{equation*}
    P_{r}^{*}(\sigma) = \begin{cases}
       \{ p_{L} \}, & \text{ if } M(\sigma) > 0, \\
       \{ p_{U} \}, & \text{ if } M(\sigma)  < 0, \\
        [p_{L}, p_{U}], & \text{ if } M(\sigma)  = 0.
    \end{cases}
\end{equation*}
Recall $K^{*}(\Sigma)$ is the set of joint distributions in $\overline{co}(\Pi(\Sigma))$ that minimizes the receiver's payoff under $\tau^{*}$. Define
\begin{equation*}
    S_{L}(\Sigma) = \{\sigma \mid \pi^{(p_{L}, \sigma)} \in K^{*}(\Sigma) \}, \quad S_{U}(\Sigma) = \{\sigma \mid \pi^{(p_{U}, \sigma)} \in K^{*}(\Sigma) \}.
\end{equation*}
By convexity of $\Sigma$, it follows that $S_{L}(\Sigma)$ and $S_{U}(\Sigma)$ are both convex sets.



When $\Sigma$ is obedient, there exists $\pi \in K^{*}(\Sigma)$ such that obedience of $\pi$ holds. As $\pi \in K^{*}(\Sigma)$, it can be represented as a convex combination of extreme points in $K^{*}(\Sigma)$. Notice that all extreme points of $K^{*}(\Sigma)$ take the form of either $\pi^{(p_{L}, \sigma)}$ for $\sigma \in S_{L}(\Sigma)$ or $\pi^{(p_{U}, \sigma)}$ for $\sigma \in S_{U}(\Sigma)$. Thus, by taking convex combinations within $S_{L}(\Sigma)$ and $S_{U}(\Sigma)$, there exist $\sigma_{L} \in S_{L}(\Sigma)$ and $\sigma_{U} \in S_{U}(\Sigma)$, and $\alpha \in [0,1]$ such that
\begin{equation*}
    \pi = \alpha \pi^{(p_{L}, \sigma_{L})} + (1-\alpha) \pi^{(p_{U}, \sigma_{U})}.
\end{equation*}
The obedience of $\pi$ thus implies that
\begin{align*}
    &\alpha \Phi_{a \rightarrow b}(p_{L}, \sigma_{L}) + (1-\alpha) \Phi_{a \rightarrow b}(p_{U}, \sigma_{U}) \leq 0, \text{ and } \\
    &\alpha \Phi_{a \rightarrow b}(p_{L}, \sigma_{L}) + (1-\alpha) \Phi_{a \rightarrow b}(p_{U}, \sigma_{U}) + k - (1+k)(\alpha p_{L} + (1-\alpha)p_{U})\leq 0. 
\end{align*}
Let $p_{\alpha} = \alpha p_{L} + (1-\alpha)p_{U}$. Then obedience of $\pi$ is equivalent to 
\begin{align*}
    &\alpha \Phi_{a \rightarrow b}(p_{L}, \sigma_{L}) + (1-\alpha) \Phi_{a \rightarrow b}(p_{U}, \sigma_{U}) \leq \min \{0, (1+k)p_{\alpha} - k\}.
\end{align*}
Since $M(\sigma_{L}) \geq 0$ and $M(\sigma_{U}) \leq 0$, there exists $\lambda \in [0,1]$ such that $M(\hat{\sigma}) = 0$ for $\hat{\sigma} = \lambda \sigma_{L} + (1-\lambda) \sigma_{U}$. In this case, $P_{r}^{*}(\hat{\sigma}) = [p_{L}, p_{U}]$. I next show that $\hat{\sigma}$ is obedient under $p_{\alpha}$. 

First notice that 
\begin{align*}
    u_{r}(p, \sigma, \tau^{*}) = p (w_{1} + 1) + (1-p) (w_{2} - k) - \Phi_{a \rightarrow b}(p, \sigma),
\end{align*}
which further implies that 
\begin{equation*}
    u_{r}(p, \sigma, \tau^{*}) \geq u_{r}(p, \sigma', \tau^{*}) \iff \Phi_{a \rightarrow b}(p, \sigma) \leq \Phi_{a \rightarrow b}(p, \sigma').
\end{equation*}
By definition of $S_{L}(\Sigma)$ and $S_{U}(\Sigma)$, it then holds that 
\begin{equation*}
    \Phi_{a \rightarrow b}(p_{L}, \sigma_{U}) \leq \Phi_{a \rightarrow b}(p_{L}, \sigma_{L}), \text{ and } \Phi_{a \rightarrow b}(p_{U}, \sigma_{L}) \leq \Phi_{a \rightarrow b}(p_{U}, \sigma_{U}), \quad \forall \sigma \in \Sigma.
\end{equation*}
Then by linearity of $\Phi_{a \rightarrow b}(p, \cdot)$ in $\sigma$, it follows that 
\begin{equation*}
    \Phi_{a \rightarrow b}(p_{L}, \hat{\sigma}) \leq \Phi_{a \rightarrow b}(p_{L}, \sigma_{L}), \text{ and } \Phi_{a \rightarrow b}(p_{U}, \hat{\sigma}) \leq \Phi_{a \rightarrow b}(p_{U}, \sigma_{U}).
\end{equation*}
Furthermore, by linearity of $\Phi_{a \rightarrow b}(\cdot, \hat{\sigma})$ in $p$, it holds that
\begin{align*}
    \Phi_{a \rightarrow b}(p_{\alpha}, \hat{\sigma}) &= \alpha \Phi_{a \rightarrow b}(p_{L}, \hat{\sigma}) + (1-\alpha) \Phi_{a \rightarrow b}(p_{U}, \hat{\sigma}) \\
    &\leq \alpha \Phi_{a \rightarrow b}(p_{L}, \sigma_{L}) + (1-\alpha) \Phi_{a \rightarrow b}(p_{U}, \sigma_{U}) \leq \min \{0, (1+k)p_{\alpha} - k\}.
\end{align*}
Thus, obedience condition \eqref{eq:obedience_statistical} holds for $\hat{\sigma}$ at $p_{\alpha}$. Since $P_{r}^{*}(\hat{\sigma}) = [p_{L}, p_{U}]$, it follows that $\tau^{*} \in BR(\hat{\sigma})$.
\end{proof}

\bibliographystyle{econ}
\bibliography{references.bib}
 
\end{document}